\tikzset{snake it/.style={decorate, decoration=snake}}
\newtheorem{theorem}{Theorem}
\newtheorem{lemma}{Lemma}
\newtheorem{corollary}{Corollary}
\newtheorem{definition}{Definition}
\newcommand{\nd}[1][d]{\mathcal{N}_{#1}}
\newcommand{\pd}[1][d]{\mathcal{P}_{#1}}
\title{Herdable Systems Over Signed, Directed Graphs}
\author{ Sebastian F. Ruf$^{1}$, Magnus Egerstedt$^{1}$, and Jeff S. Shamma$^{2}$
	\thanks{$^{1}$S.F. Ruf and M. Egerstedt are with the School of Electrical and Computer Engineering, Georgia Institute of Technology and can be reached at
		{\tt\small ruf@gatech.edu} and {\tt\small magnus@gatech.edu} resp.}%
	\thanks{$^{2}$ J.S.Shamma is with Computer, Electrical and Mathematical Science and Engineering Division, King Abdullah University of Science and Technology, Saudi Arabia, {\tt\small jeff.shamma@kaust.edu.sa}. The research reported in this publication was supported by funding from King Abdullah University of Science and Technology (KAUST).}%
}
\begin{document}
	\maketitle
	\thispagestyle{empty}
	\pagestyle{empty}
	\begin{abstract}
		This paper considers the notion of herdability, a set-based reachability condition, which asks whether the state of a system can be controlled to be element-wise larger than a non-negative threshold. The basic theory of herdable systems is presented, including a necessary and sufficient condition for herdability. This paper then considers the impact of the underlying graph structure of a linear system on the herdability of the system, for the case where the graph is represented as signed and directed. By classifying nodes based on the length and sign of walks from an input, we find a class of completely herdable systems as well as provide a complete characterization of nodes that can be herded in systems with an underlying graph that is a directed out-branching rooted at a single input.
	\end{abstract}
\section{Introduction}\label{sec:intro}

Controllability is a fundamental property of a dynamical system, and has been an area of study since the work of Kalman et. al in the 1960s \cite{kalmancont}. However there are cases where a system need not be completely controllable to achieve desirable system outcomes. This paper considers a class of these systems by considering the reachability of a specific set, rather than the whole state space as in controllability. 

As an example, consider the case where the state of a dynamical system represents the percentage of a given community that will vote for a political candidate and the control input represents advertising. Here an advertising campaign is successful if the state can be driven high enough for the candidate to win, regardless of whether communities can be made to vote at any specific percentage as would be required by complete controllability. 

In order to study systems that are not completely controllable but for which certain desirable control outcomes are still achievable, this paper introduces a set-based reachability condition known as herdability, which considers whether the components of the state can be driven above a non-negative threshold. This target set describes desired behavior in social and biological sciences where many systems act based on thresholds, for example collective social behavior \cite{granovetter1978threshold} and the firing of a neuron \cite{hodgkin1952quantitative}. More formally, a continuous time, linear system,\begin{equation}\label{eq:sys}
\dot{\mathbf{x}}=A\mathbf{x}+B\mathbf{u}
\end{equation}
where $A\in\mathbb{R}^{n\times n}$ and $B\in\mathbb{R}^{n\times m}$, is \emph{completely herdable} if there exists a control input that makes the state enter the set $\mathcal{H}_d = \{ \mathbf{x} \in \mathbb R^{n} : \mathbf{x}_{i} \geq d\}$ for all $d\geq0$, where $ \mathbf{x}_{i}$ is the $i$-th element of $\mathbf{x}$. 
 Returning to the example of voting in an election, $\mathbf{x}_{i}$ is now the percentage of community $i$ that will vote for a candidate and reaching the set $\mathcal{H}_{.5} = \{ \mathbf{x} \in \mathbb R^{n} : (\mathbf{x})_{i} \geq .5\}$ wins the election.

This paper considers the herdability of linear systems based on the structure of the underlying interaction graph, which encodes information about how states and inputs interact with each other. The relationship between a graph and a dynamical system has been previously considered using two primary approaches. 

The first approach moves from a specific graph structure to a system dynamic, most often consensus dynamics \cite{mesbahi2010graph}. The controllability of these consensus system has been shown to be directly related to the structure of the graph, in that system  controllability is lost when nodes are symmetric with respect to an input \cite{ rahmani2009controllability,martini2010controllability,zhang2014upper,yaziciouglu2016graph,chapman2014symmetry,alemzadeh2017controllability}. As will be seen in this paper, it is in fact a loss of symmetry that causes a loss of herdability. 

The second approach takes a dynamical system and maps it to a graph to discuss properties of all systems that share the same graph structure. This approach is known as structural controllability \cite{Lin1974,shields1976structural,glover1976characterization}.
In structural controllability, a dynamical system is represented by a graph in which each edge of the graph is assigned a weight in $\mathbb{R}$. A system is structurally controllable if and only if it is controllable for almost all weights that are assigned to the edges, which can be verified directly from the structure of the underlying graph. Structural controllability has been extended to sign controllability, which determines controllability based on the sign pattern of the system matrices \cite{johnson1993sign,hartung2013characterization} or the sign pattern of the underlying graph \cite{tsatsomeros1998sign}.
 
This paper shares the approach of sign controllability in that the control properties of classes of systems are considered based on the sign pattern of the graph structure. Specifically this paper represents the interaction structure as a signed, directed graph as the sign pattern is often sufficient information to determine the herdability of a system. 
 Signed graphs are used in the social networks context to represent systems in which agents are both friends and enemies \cite{easley2010networks}. In this light, the central problem of the paper can be phrased in a social networks context as follows: how does the grouping of friends and enemies in the network relate to the ability to convince agents in the system to hold an opinion in $\mathcal{H}_{d}$?

The rest of the paper is organized as follows: Section~\ref{sec:herd} introduces the basic theory of herdable system. In Section~\ref{sec:pre} a graph theoretic characterization of the interaction structure of a linear system is presented. Section \ref{sec:nes} considers a necessary condition for complete herdability based on the underlying graph structure. Section \ref{sec:class} considers the sign herdability of a system based on the underlying graph. The paper concludes in Section~\ref{sec:conc}.
	 
	 	 \subsection*{Notation:} For a vector $\mathbf{k}\in\mathbb{R}^{n}$, $\mathbf{k}_{i}$ refers to the $i$-th element of $\mathbf{k}$. For a matrix $K\in\mathbb{R}^{n\times m}$, $K_{i,:}$ refers to the $i$-th row of $K$, $K_{:,j}$ refers to the $j$-th column of $K$ and $K_{i,j}$ to the $i,j$-th element of $K$. The cardinality of the set $\mathcal{S}$ is expressed as $|S|$. Let $\mathrm{sgn}(\cdot)$ denote the sign function which follows
	 	 \begin{equation}
	 	 \mathrm{sgn}(x) = \left\{\begin{array}{lr}
	 	 -1 & \text{for } x<0,\\
	 	 0 & \text{for } x=0,\\
	 	 1 & \text{for } x>0.
	 	 \end{array}\right.
	 	 \end{equation}  Let $\mathbf{0}_{n}\in\mathbb{R}^{n}$ be a vector of zeros, $\mathbf{1}_{n}\in\mathbb{R}^{n}$ be a vector of ones.
	 	 Logical AND is denoted by $\wedge$ and $\veebar$ denotes logical EXCLUSIVE OR.  
	 	 A vector is unisigned if all non-zero elements have the same sign. If a vector $\mathbf{v}$ is unisigned then $\mathrm{sgn}(\mathbf{v})=1$ ($\mathrm{sgn}(\mathbf{v})=-1$) is all elements of $\mathbf{v}$ are positive (negative). 
	 	 \section{Characterizing Herdability}\label{sec:herd}
	 	 In this section, the basic theory of the herdability of continuous time, linear dynamical systems is presented as well as a characterization of herdability based on the system controllability matrix. Of course before characterizing herdability, the following definitions of herdability are required.  
	 	 \begin{definition}
	 	 	The state $i$ of a linear system is herdable if $\forall \mathbf{x}(0)\in\mathbb{R}^{n}$ and $h\geq 0$, there exists a finite time $t_{f}$ and an input $\mathbf{u}(t), \ t\in[0,t_{f}]$ such that $\mathbf{x}(t_{f})_{i}\geq h$ under $\mathbf{u}(t)$. 
	 	 \end{definition}
	 	 \begin{definition}
	 	 	A set of states, $\mathcal{X}\subseteq\{1,2,\dots,n\}$, is herdable if $\forall \mathbf{x}(0)\in\mathbb{R}^{n}$ and $h\geq 0$, there exists a finite time $t_{f}$ and an input $\mathbf{u}(t), \ t\in[0,t_{f}]$ such that $\mathbf{x}(t_{f})_{i}\geq h, \forall i\in \mathcal{X}$ under control input $\mathbf{u}(t)$. 
	 	 \end{definition}
	 	 \begin{definition}
	 	 	A linear system is completely herdable if the set of states $\mathcal{X}=\{1,2,\dots,n\}$ is herdable.
	 	 \end{definition} 
	 	 To translate the definition of herdability to a necessary and sufficient condition for herdability requires some basic concepts from the study of linear systems, specifically the interplay between the reachable subspace and the controllability matrix.  
	
	 	 Define the reachable subspace $\mathcal{R}[0,t]$ as
	 	 \begin{equation}
	 	 \mathcal{R}[0,t]=\left\{\mathbf{x}_{1}\in\mathbb{R}^{n}:\exists \mathbf{u}(\cdot), \mathbf{x}_{1}=\int_{0}^{t}e^{A(t-\tau)}B\mathbf{u}(\tau)d\tau\right\}.
	 	 \end{equation}   
	 	 
	 	 \noindent The controllability matrix $\mathcal{C}$ of a linear system is 
	 	  \begin{equation}
	 	 \mathcal{C}=\left[B,AB,A^{2}B,\dots,A^{n-1}B \right]
	 	 \end{equation}
	 	 
	 	 It is possible to characterize the herdability of a system based on its controllability matrix. Recall the following from \cite{hespanha2009linear} (though any introductory linear systems text will do):
	 	 \begin{lemma}Theorem 11.5 from \thlabel{lem:rceq} \cite{hespanha2009linear} \begin{equation}
	 	 	\mathcal{R}[0,t]=\mathrm{range}(\mathcal{C}).
	 	 	\end{equation} 
	 	 \end{lemma}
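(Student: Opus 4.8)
The statement is the classical reachability/controllability-matrix equivalence, so the plan is to prove the two subspace inclusions $\mathcal{R}[0,t]\subseteq\mathrm{range}(\mathcal{C})$ and $\mathrm{range}(\mathcal{C})\subseteq\mathcal{R}[0,t]$ separately, both as subspaces of $\mathbb{R}^{n}$, and then conclude equality.

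For the first (easy) inclusion I would invoke the Cayley--Hamilton theorem. Since $A$ annihilates its own characteristic polynomial, $A^{n}$ lies in the span of $I,A,\dots,A^{n-1}$, and inductively so does every higher power; expanding the matrix exponential $e^{A(t-\tau)}=\sum_{k\geq 0}\tfrac{(t-\tau)^{k}}{k!}A^{k}$ therefore collapses to $e^{A(t-\tau)}=\sum_{k=0}^{n-1}\alpha_{k}(t-\tau)A^{k}$ for suitable scalar functions $\alpha_{k}$. Substituting this into the definition of $\mathcal{R}[0,t]$ and pulling the constant matrices $A^{k}B$ outside the integral writes any reachable $\mathbf{x}_{1}$ as $\sum_{k=0}^{n-1}A^{k}B\,\mathbf{c}_{k}$, where $\mathbf{c}_{k}=\int_{0}^{t}\alpha_{k}(t-\tau)\mathbf{u}(\tau)\,d\tau\in\mathbb{R}^{m}$. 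This is precisely a vector in the column span of $\mathcal{C}=[B,AB,\dots,A^{n-1}B]$, giving $\mathcal{R}[0,t]\subseteq\mathrm{range}(\mathcal{C})$.

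For the reverse inclusion I would argue via orthogonal complements: it suffices to show that any $\mathbf{v}$ orthogonal to $\mathcal{R}[0,t]$ is also orthogonal to $\mathrm{range}(\mathcal{C})$. Suppose $\mathbf{v}^{\top}\int_{0}^{t}e^{A(t-\tau)}B\mathbf{u}(\tau)\,d\tau=0$ for every admissible input. Choosing the particular input $\mathbf{u}(\tau)=B^{\top}e^{A^{\top}(t-\tau)}\mathbf{v}$ then forces $\int_{0}^{t}\lVert\mathbf{v}^{\top}e^{A(t-\tau)}B\rVert^{2}\,d\tau=0$, and continuity of the integrand yields $\mathbf{v}^{\top}e^{A(t-\tau)}B=0$ for all $\tau\in[0,t]$. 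Differentiating this identity repeatedly in $\tau$ and evaluating at $\tau=t$ produces $\mathbf{v}^{\top}A^{k}B=0$ for each $k=0,1,\dots,n-1$, that is $\mathbf{v}^{\top}\mathcal{C}=0$. Hence $\mathcal{R}[0,t]^{\perp}\subseteq\mathrm{range}(\mathcal{C})^{\perp}$, which is equivalent to $\mathrm{range}(\mathcal{C})\subseteq\mathcal{R}[0,t]$. Combining the two inclusions gives the claimed equality.

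The main obstacle is the reverse inclusion, and within it the passage from \emph{the integral vanishes against every input} to \emph{the integrand vanishes identically}. The clean way around this is the self-testing choice of input above, which reduces the question to the vanishing of a nonnegative integral of a continuous function; the subsequent differentiation step is then routine but must be organized so that exactly the powers $A^{0}$ through $A^{n-1}$ appear, matching the block structure of $\mathcal{C}$.
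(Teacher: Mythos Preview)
Your argument is correct and is the standard textbook proof of this classical fact. Note, however, that the paper does not give its own proof of this lemma at all: it is simply quoted as Theorem~11.5 from \cite{hespanha2009linear} and used as a black box, so there is nothing in the paper to compare your proof against beyond the citation itself.
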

	 	 
	 	 With \thref{lem:rceq} it is possible to prove the following condition for the herdability of a set of states.  
	 	 \begin{theorem}\thlabel{lem:herdset}
	 	 	A set of states $\mathcal{X}\subseteq\{1,2,\dots,n\}$ is herdable if and only if there is exists a vector $\mathbf{k}\in \mathrm{range}(\mathcal{C})$ that satisfies $\mathbf{k}_{i}>0$ for all $i\in\mathcal{X}$.
	 	 \end{theorem}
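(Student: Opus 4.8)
The plan is to exploit the standard decomposition of the solution of the linear system \eqref{eq:sys} into its zero-input and zero-state parts,
\[
\mathbf{x}(t_f) = e^{At_f}\mathbf{x}(0) + \int_0^{t_f} e^{A(t_f-\tau)}B\mathbf{u}(\tau)\,d\tau,
\]
and to identify the forced (second) term with $\mathrm{range}(\mathcal{C})$ by invoking \thref{lem:rceq}, which gives $\mathcal{R}[0,t_f]=\mathrm{range}(\mathcal{C})$ for every $t_f>0$. Once $t_f$ and $\mathbf{x}(0)$ are fixed, the free term $e^{At_f}\mathbf{x}(0)$ is a constant, input-independent vector, so herding the components indexed by $\mathcal{X}$ above a threshold amounts to choosing a forced response in $\mathrm{range}(\mathcal{C})$ whose selected entries are large enough.

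For necessity, I would specialize the definition of herdability to the zero initial condition $\mathbf{x}(0)=\mathbf{0}_n$ and a strictly positive threshold $h>0$. Herdability then supplies a time $t_f$ and an input for which $\mathbf{x}(t_f)_i\ge h>0$ for every $i\in\mathcal{X}$; but with $\mathbf{x}(0)=\mathbf{0}_n$ the response reduces to the forced term, so $\mathbf{x}(t_f)\in\mathrm{range}(\mathcal{C})$ by \thref{lem:rceq}. Taking $\mathbf{k}=\mathbf{x}(t_f)$ then produces a vector in $\mathrm{range}(\mathcal{C})$ with $\mathbf{k}_i>0$ for all $i\in\mathcal{X}$, as required.

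For sufficiency, I would fix an arbitrary initial condition $\mathbf{x}(0)$, threshold $h\ge 0$, and time $t_f>0$, and use that $\mathrm{range}(\mathcal{C})$ is a subspace: every scalar multiple $\alpha\mathbf{k}$ is reachable, so some input has forced response exactly $\alpha\mathbf{k}$. Each target component then satisfies $\mathbf{x}(t_f)_i=(e^{At_f}\mathbf{x}(0))_i+\alpha\mathbf{k}_i$, and since $\mathbf{k}_i>0$ for all $i\in\mathcal{X}$ and $\mathcal{X}$ is finite, choosing
\[
\alpha \ge \max_{i\in\mathcal{X}} \frac{h-(e^{At_f}\mathbf{x}(0))_i}{\mathbf{k}_i}
\]
drives every such component above $h$ simultaneously, establishing that $\mathcal{X}$ is herdable.

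The main obstacle lies in the sufficiency direction: the free response depends on the arbitrary initial condition and may push some components indexed by $\mathcal{X}$ in the wrong direction. The resolution rests on the strict positivity of every $\mathbf{k}_i$ with $i\in\mathcal{X}$, together with the finiteness of $\mathcal{X}$, which let a single large positive scalar $\alpha$ dominate the fixed free response uniformly across all target components. Were any $\mathbf{k}_i$ non-positive, no positive scaling could guarantee crossing the threshold at that $i$, so the common positive sign of these entries is precisely what makes the scaling argument close.
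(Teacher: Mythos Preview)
Your proposal is correct and follows essentially the same approach as the paper's own proof: both directions use \thref{lem:rceq} to identify the forced response with $\mathrm{range}(\mathcal{C})$, necessity is obtained by specializing to $\mathbf{x}(0)=\mathbf{0}_n$ with a strictly positive threshold, and sufficiency is obtained by scaling $\mathbf{k}$ by a large enough positive constant to dominate the free response $e^{At_f}\mathbf{x}(0)$ on the coordinates in $\mathcal{X}$. Your choice of the scaling constant $\alpha\ge\max_{i\in\mathcal{X}}\bigl(h-(e^{At_f}\mathbf{x}(0))_i\bigr)/\mathbf{k}_i$ is in fact slightly tighter than the paper's, which bounds over all coordinates and divides by the smallest positive entry of $\mathbf{k}$, but the underlying idea is identical.
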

	 	 \begin{proof}
	 	 	Define $\mathcal{K}$ to be the set that contains the positive elements of $\mathbf{k}$, $\mathcal{K}=\{p\in\mathbb{R} \ | \ p>0 \ \wedge \ \exists \ i \text{ such that } \mathbf{k}_{i}=p\}.$ 
	 	 	
	 	 	($\mathbf{k}\in \mathrm{range}(\mathcal{C})\Rightarrow$ $\mathcal{X}$ is herdable) Consider the problem of controlling all states in the set $\mathcal{X}$ to be greater than some lower threshold $h\geq 0$ from an initial condition $\mathbf{x}(0)$. Suppose there is a $\mathbf{k}\in \mathrm{range}(\mathcal{C})$, that satisfies $\mathbf{k}_{i}>0$ if $i\in\mathcal{X}$. As $\mathbf{k}\in \mathrm{range}(\mathcal{C})$, $\exists \pmb{\alpha}$ such that 
	 	 	$\mathcal{C}\pmb{\alpha}=\mathbf{k}.$
	 	 	If \begin{equation}\gamma>\frac{\max_{j} \  (h\mathbf{1}_{n}-e^{At}\mathbf{x}(0))_{j}}{\min{\mathcal{K}}}\end{equation} and $\mathbf{v}=\gamma\pmb{\alpha}$ then for all $i\in \mathcal{X}$ it holds that
	 	 	\begin{equation}
	 	 	(\mathcal{C}\mathbf{v})_{i} > (h\mathbf{1}_{n}-e^{At}\mathbf{x}(0))_{i}.
	 	 	\end{equation} 
	 	 	As the range of $\mathcal{C}$ is the same as the reachable subspace, $\exists \mathbf{u}(\cdot)$ such that for all $i\in \mathcal{X}$ 
	 	 	\begin{equation}(e^{At}\mathbf{x}(0)+\int_{0}^{t}e^{A(t-\tau)}B\mathbf{u}(\tau)d\tau)_{i}>h
	 	 	\end{equation}
	 	 	then all states in $\mathcal{X}$ can be made larger that $h$ and as $h$ is arbitrary the subset of states $\mathcal{X}$ is herdable.
	 	 	
	 	 	($\mathcal{X}$ is herdable $\Rightarrow\mathbf{k}\in \mathrm{range}(\mathcal{C})$) 
	 	 	As the set of state nodes $\mathcal{X}$ is herdable, each element of $\mathcal{X}$ can be made larger than some $h^{\ast}>0$ from any initial condition. Consider the initial condition $x(0)=\mathbf{0}_{n}$. Then by the herdability of the set $\mathcal{X}$ there exists a vector $\mathbf{k}^{\ast}$ that satisfies $\mathbf{k}^{\ast}_{i}>h^{\ast} \ \forall i\in \mathcal{X}$ and an input $\mathbf{u}(\cdot)$ such that \begin{equation}\int_{0}^{t}e^{A(t-\tau)}B\mathbf{u}(\tau)d\tau=\mathbf{k}^{\ast}
	 	 	\end{equation} Then $\mathbf{k}^{\ast}_{i}>h^{\ast}>0 \ \forall i\in\mathcal{X}$. By the definition of $\mathcal{R}[0,t]$, $\mathbf{k}^{\ast}\in \mathcal{R}[0,t]$ and consequently $\mathbf{k}^{\ast}\in \mathrm{range}(\mathcal{C})$ by \thref{lem:rceq}.
	 	 \end{proof}
	 	 
	 	 \begin{corollary}\thlabel{cor:krange}
	 	 	A linear system is completely herdable if and only if there exists an element-wise positive vector $\mathbf{k}\in \mathrm{range}(\mathcal{C})$. 
	 	 \end{corollary}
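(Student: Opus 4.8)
The plan is to obtain this corollary as an immediate specialization of \thref{lem:herdset}. By definition, a linear system is completely herdable precisely when the full set of states $\mathcal{X}=\{1,2,\dots,n\}$ is herdable, so I would simply invoke \thref{lem:herdset} with this particular choice of $\mathcal{X}$ rather than prove anything from scratch.

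First I would apply the theorem directly: the set $\{1,2,\dots,n\}$ is herdable if and only if there exists a vector $\mathbf{k}\in\mathrm{range}(\mathcal{C})$ satisfying $\mathbf{k}_{i}>0$ for all $i\in\{1,2,\dots,n\}$. The only remaining observation is that the condition ``$\mathbf{k}_{i}>0$ for every index $i$'' is exactly the statement that $\mathbf{k}$ is element-wise positive. Chaining the definitional equivalence (complete herdability $\Leftrightarrow$ herdability of the full index set) with the theorem's equivalence then yields the claim in both directions.

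Since all of the analytic content — the construction of a scaled input in the sufficiency direction and the zero-initial-condition argument together with \thref{lem:rceq} in the necessity direction — is already carried out in the proof of \thref{lem:herdset}, there is no genuine obstacle here. The result is a restatement of the theorem for the full index set, and the proof amounts to substituting $\mathcal{X}=\{1,2,\dots,n\}$ and unpacking the definition of complete herdability.
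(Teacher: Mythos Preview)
Your proposal is correct and matches the paper's treatment: the corollary is stated without proof immediately after \thref{lem:herdset}, so the intended argument is exactly the specialization $\mathcal{X}=\{1,2,\dots,n\}$ together with the definition of complete herdability that you describe.
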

	 	 
	 	 Verifying the existence of a element-wise positive vector $\mathbf{k}\in \mathrm{range}(\mathcal{C})$ can be difficult, especially when dealing with large systems. As such, this paper considers a sufficient condition for complete herdability, which will subsequently be expanded on based on the underlying system graph.
	 	 \begin{theorem}\thlabel{th:ccond}
	 	 	If for each state $i \in \{1,2,\dots,n\}$, there exists a $j$ such that $(\mathcal{C})_{i,j}\neq 0$ and the vector $(\mathcal{C})_{:,j}$ is unisigned, then the system is completely herdable. 
	 	 \end{theorem}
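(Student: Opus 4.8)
The plan is to invoke \thref{cor:krange} and construct an element-wise positive vector $\mathbf{k}\in\mathrm{range}(\mathcal{C})$ directly from the hypothesized unisigned columns. First I would, for each state $i$, fix a column index $j(i)$ witnessing the hypothesis, so that $(\mathcal{C})_{i,j(i)}\neq 0$ and $(\mathcal{C})_{:,j(i)}$ is unisigned. Since a unisigned column has all of its nonzero entries of a single sign, multiplying it by $s_i := \mathrm{sgn}((\mathcal{C})_{i,j(i)})\in\{-1,+1\}$ produces a column whose entries are all nonnegative and whose row-$i$ entry is strictly positive.

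The key step is to exploit the fact that a sum of element-wise nonnegative vectors cannot cancel. I would set
\begin{equation}
\mathbf{k} = \sum_{i=1}^{n} s_i (\mathcal{C})_{:,j(i)}.
\end{equation}
Each summand is element-wise nonnegative, so $\mathbf{k}$ is element-wise nonnegative. For any coordinate $\ell$, the single term $s_\ell (\mathcal{C})_{:,j(\ell)}$ already contributes $|(\mathcal{C})_{\ell,j(\ell)}|>0$ in row $\ell$, while every other summand contributes nonnegatively in that row, so $\mathbf{k}_\ell>0$. Hence $\mathbf{k}$ is element-wise positive.

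Finally, $\mathbf{k}$ is by construction a linear combination of columns of $\mathcal{C}$, so $\mathbf{k}\in\mathrm{range}(\mathcal{C})$, and \thref{cor:krange} immediately yields complete herdability.

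The argument is short, and the only place demanding care is the sign-normalization step: the hypothesis must be used to guarantee that each chosen column, after multiplication by $s_i$, is genuinely nonnegative rather than merely having a positive entry in row $i$. This is precisely where the \emph{unisigned} assumption is essential — without it, overlapping columns carrying mixed signs could cancel in the summation and destroy positivity in some coordinate. I do not anticipate any further obstacle.
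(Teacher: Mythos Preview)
Your proposal is correct and follows essentially the same approach as the paper: both construct an element-wise positive vector in $\mathrm{range}(\mathcal{C})$ by summing sign-normalized unisigned columns and then invoke \thref{cor:krange}. The only cosmetic difference is that the paper first deduplicates the chosen column indices into a set $\Gamma$ before forming the coefficient vector $\pmb{\alpha}\in\{-1,0,1\}^{nm}$, whereas you allow repeated columns in your sum; since repeated unisigned columns share the same sign and hence the same normalizer, this repetition is harmless and the arguments are equivalent.
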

	 	 \begin{proof}
	 	 	For state $i$ let $z^{i}$ be the column which corresponds to the unisigned vector with non-zero element $i$. Consider $\Gamma=\bigcup_{i} z^{i}$, the set of all $z^{i}$ such that there is no repeated values. This is necessary as it may be that for two states $i$ and $j$, $z^{i}=z^{j}$. 
	 	 	
	 	 	Construct the vector $\pmb{\alpha}$ such that $\pmb{\alpha}_\kappa=0$ if $\kappa\notin\Gamma$, $\pmb{\alpha}_\kappa=1$ if $\kappa\in\Gamma$ and $\mathrm{sgn}((\mathcal{C})_{:,\kappa})=1$ and $\pmb{\alpha}_\kappa=-1$ if $\kappa\in\Gamma$ and $\mathrm{sgn}((\mathcal{C})_{:,\kappa})=-1$. As the condition of the Theorem holds for all $i \in \{1,2,\dots,n\}$, there exists $ \mathbf{k}\in\mathbb{R}^{n}$ which is element wise positive such that
	 	 	$\mathcal{C}\pmb{\alpha}=\mathbf{k}$ and the system is completely herdable by \thref{cor:krange}.
	 	 \end{proof}
	 	 \section{Characterizing Dynamical Systems via Graphs}\label{sec:pre}
	 	 
	 This paper concerns itself with the underlying graphical structure of the dynamical system in Equation~\eqref{eq:sys}; more specifically its representation as one of two graphs. Each of these graphs contain different levels of information about the interactions between the states and inputs. The first graph is a signed graph $G^{s}=(\mathcal{V},\mathcal{E},s(\cdot))$ where $s(\cdot)$ accepts an edge and returns a label in $\{+1,-1\}$, which is the sign of the edge. This signed graph represents a class of systems whose edge weights all have the same sign pattern.   
	 The second graph is a weighted graph $G^{w}=(\mathcal{V},\mathcal{E},w(\cdot))$ where $w(\cdot)$ accepts an edge and returns a weight in $\mathbb{R}$. The weighted graph is the representation of a single system. 
	 
	 As will be shown, the weighted graph $G^{w}$ can be directly related to the controllability matrix and therefore the controllability properties of the system. This paper focuses on the interplay between $G^{s}$ and $G^{w}$, in that the presented structural results are cases where the results for the herdability of a system based on the weighted $G^{w}$ can be extended to all signed graphs with the same sign structure $G^{s}$ regardless of the weights of the edges in $G^{w}$, a notion similar to sign controllability \cite{johnson1993sign,tsatsomeros1998sign,hartung2013characterization}. As such, this paper will concern itself with extending \thref{th:ccond} to consider the sign herdability of a class of systems. 
	 
	 The formal definition of the graphs follows. The set of vertices satisfies $\mathcal{V}=\mathcal{V}_{x}\cup\mathcal{V}_{u}, \ \mathcal{V}_{x}\cap\mathcal{V}_{u}=\emptyset,$ where $\mathcal{V}_{x}=\{v_{x1},v_{x2},\dots,v_{xn}\}$ is a set of vertices representing the states of the system and $\mathcal{V}_{u}=\{v_{u1},v_{u2},\dots,v_{um}\}$ represents the inputs to the system. An arbitrary element of $\mathcal{V}$ will be referred to by $v_{i}$ for some index $i$, as will elements $v_{xi}\in\mathcal{V}_{x}$ and $v_{ui}\in \mathcal{V}_{u}$. The state $i$ will be interchangeably referred to by the node $v_{xi}$ as will the input $j$ and the node $v_{uj}$.  
	 
	 The edge set satisfies $\mathcal{E}=\mathcal{E}_{x}\cup\mathcal{E}_{u}$ where the edges in $\mathcal{E}_{x}$ represent interactions between states of the system, while $\mathcal{E}_{u}$ represents interactions between the inputs and the states. Denote the directed edge from $v_{i}$ to $v_{j}$ as $(v_{i},v_{j})$. Then $(v_{xi},v_{xj})\in\mathcal{E}_{x} \Leftrightarrow A(j,i)\neq 0$ and $(v_{ui},v_{xj})\in\mathcal{E}_{u} \Leftrightarrow B(j,i)\neq 0$. An arbitrary element of $\mathcal{E}$ will be referred to by $e_{i}$ for some $i$. By partitioning the node and edges sets, it is possible to define the state subgraph $G_{x}=(\mathcal{V}_{x},\mathcal{E}_{x})$, which captures only interactions between states as well as the input subgraph $G_{u}=(\mathcal{V},\mathcal{E}_{u})$ which captures interactions from the inputs to the states. Note that the input nodes do not interact with each other nor is it possible to have an edge $(v_{xi},v_{uj})$. 

	 When considering the signed graph $G^{s}$,  $s((v_{xi},v_{xj})) =\mathrm{sgn}(A(j,i))$ and $s((v_{ui},v_{xj})) =\mathrm{sgn}(B(j,i))$. Similarly for $G^{w}$,   $w((v_{xi},v_{xj}))=A(j,i)$ and $w((v_{ui},v_{xj}))=B(j,i)$. 
	 
	 As an example, consider the system
	 \begin{equation}\label{eq:ex}
	 \dot{\mathbf{x}}=\begin{bmatrix}
	 -1 & 0 & 0\\
	 5 & 0 & 2 \\
	 4 & -3&  0
	 \end{bmatrix}+\begin{bmatrix} 0& -2\\
	 2& 0\\
	 0& 3
	 \end{bmatrix}\mathbf{u}
	 \end{equation} which is translated into $G^{s}$ and $G^{w}$ in Fig.~\ref{fig:ex}.
	 		 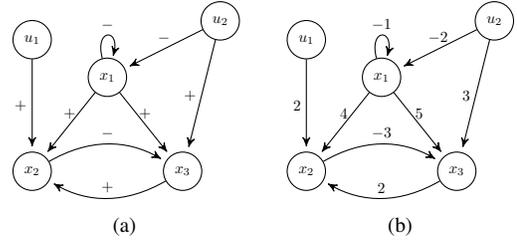
\begin{figure}[h]
	 		\centering
	 		\subfloat[]{\label{subfig:seq2}\begin{tikzpicture}[
	 			->,
	 			>=stealth',
	 			shorten >=2pt,
	 			auto,
	 			node distance=0.5cm,
	 			every text node part/.style={align=center},
	 			scale=0.25, every node/.style={scale=0.58}
	 			]
	 			\tikzstyle{every state}=[fill=none,draw=black,text=black]
	 			\node[
	 			state,
	 			] (n0) at(2,0)
	 			{$u_{2}$};
	 			\node[
	 			state,
	 			] (n1) at(-8,-1)
	 			{$u_{1}$};
	 			\node[
	 			state,
	 			] (n2) at(0,-8)
	 			{$x_{3}$};
	 			\node[
	 			state,
	 			] (n3) at(-8,-8)
	 			{$x_{2}$};
	 			\node[
	 			state,
	 			] (n4) at(-4,-3)
	 			{$x_{1}$};
	 			\path (n0) edge [->] node [left,above] {$-$} (n4)
	 			(n0) edge [->] node [left] {$+$} (n2)
	 			(n1) edge [->] node [left] {$+$} (n3)
	 			(n4) edge [->,loop above] node [left,above] {$-$} (n4)
	 			(n2) edge [->,bend left] node [left,above] {$+$} (n3)
	 			(n3) edge [->,bend left] node [left,above] {$-$} (n2)
	 			(n4) edge [->] node [left,above] {$+$}  (n2)
	 			(n4) edge [->] node [left,above] {$+$}  (n3);
	 			\end{tikzpicture}} 
 			\hspace{.5cm}
	 			 		\subfloat[]{\label{subfig:seq3}\begin{tikzpicture}[
	 			->,
	 			>=stealth',
	 			shorten >=2pt,
	 			auto,
	 			node distance=0.5cm,
	 			every text node part/.style={align=center},
	 			scale=0.25, every node/.style={scale=0.58}
	 			]
	 			\tikzstyle{every state}=[fill=none,draw=black,text=black]
	 			\node[
	 			state,
	 			] (n0) at(2,0)
	 			{$u_{2}$};
	 			\node[
	 			state,
	 			] (n1) at(-8,-1)
	 			{$u_{1}$};
	 			\node[
	 			state,
	 			] (n2) at(0,-8)
	 			{$x_{3}$};
	 			\node[
	 			state,
	 			] (n3) at(-8,-8)
	 			{$x_{2}$};
	 			\node[
	 			state,
	 			] (n4) at(-4,-3)
	 			{$x_{1}$};
	 			\path (n0) edge [->] node [left,above] {$-2$} (n4)
	 			(n0) edge [->] node [left] {$3$} (n2)
	 			(n1) edge [->] node [left] {$2$} (n3)
	 			(n4) edge [->,loop above] node [left,above] {$-1$} (n4)
	 			(n2) edge [->,bend left] node [left,above] {$2$} (n3)
	 			(n3) edge [->,bend left] node [left,above] {$-3$} (n2)
	 			(n4) edge [->] node [left,above] {$5$}  (n2)
	 			(n4) edge [->] node [left,above] {$4$}  (n3);
	 			\end{tikzpicture}}

	 		\caption{ The graphs of the system in in Equation \eqref{eq:ex}. 
	 			 ~\ref{subfig:seq2}: $G^{s}$ the signed graph. ~\ref{subfig:seq3}: $G^{w}$ the weighted graph.}
	 		 			\label{fig:ex}
	 	\end{figure}
 	
	 To describe these graphs requires a number of basic definitions from graph theory  \cite{harary2005structural}. A walk from $v_{0}$ to $v_{p}$, $\pi(v_{0},v_{p})$, is any alternating sequence of nodes and edges $\pi(v_{0},v_{p})=v_{0},e_{1},v_{1},e_{2},v_{2}\dots, v_{p-1},e_{p},v_{p}$ such that $v_{i} \in \mathcal{V} \ \forall i$ and $e_{i}=(v_{i-1},v_{i})\in \mathcal{E}$. The set of walks from $v_{0}$ to $v_{p}$ is $\theta(v_{0},v_{p})$.
	 A node $v_{j}$ is reachable from $v_{i}$, which will be written as $v_{i}\rightarrow v_{j}$, if $\theta(v_{i},v_{j})\neq\emptyset$\footnote{Reachability is discussed within both graph theory and control theory. This paper will use the term reachable in both senses, with clarification only if it is uncertain which notion of reachability is considered.}.  The length of a walk, $
	\mathrm{len}(\pi)$, is equal to the number of edges in $\pi$.
	
	A walk has an associated sign which follows \begin{equation}s(\pi)=\prod_{e_{i}\in\pi} \ s(e_{i}),\end{equation} as well as an associated weight: \begin{equation}w(\pi)=\prod_{e_{i}\in\pi} \  w(e_{i}).\end{equation} This is distinct from the weight of a walk as it is treated in many applications, such as shortest path algorithms, which consider $w(\pi)=\sum_{e_{i}\in\pi} \  w(e_{i})$  \cite{even2011graph}.
	 Referring back to the example in Fig.~\ref{fig:ex}, the walk $\pi(u_{1},x_{3})=u_{1},(u_{1},x_{2}),x_{2},(x_{2},x_{3}),x_{3}$  is of length $2$ and has $s(\pi(u_{1},x_{3}))=-1$ and $w(\pi(u_{1},x_{3}))=-6$.
	
To begin classifying the system in Equation \eqref{eq:sys} based on the signed graph $G^{s}$, we define two basic types of sets. 
	 Let $\nd^{j}$ be the set of nodes reachable from $v_{uj}$ via at least one negative walk of length $d$. Similarly $\pd^{j}$ is the set of nodes reachable from $v_{uj}$ through at least one positive walk of length $d$. If there is only one input to the system, the superscript will be dropped to refer to $\nd$ and $\pd$ instead of $\nd^{1}$ and $\pd^{1}$. Returning to the example shown in Fig. \ref{subfig:seq2}, $\mathcal{N}_{1}^{1}=\emptyset$,  $\mathcal{P}_{1}^{1}=\{x_{2}\}$, $\mathcal{N}_{2}^{1}=\{x_{3}\}$, and $\mathcal{P}_{2}^{1}=\emptyset$. 

 As will be seen, the sets $\pd^{j}$ and $\nd^{j}$ can provide sufficient information to determine the herdability of a system, and in doing so determine the sign herdability of a class of systems. To show this requires classifying the structure of the weighted graph $G^{w}$. Consider the total weight of positively signed walks from input $v_{uj}$ to node $v_{xi}$ with length $d$, \begin{equation}{\rho}_{j\rightarrow i,d}^{+}=\sum_{\pi\in \theta_{d}^{+}(v_{uj},v_{xi})}w(\pi),\end{equation}
 where $\theta_{d}^{+}(v_{uj},v_{xi})$ is the set of positive walks of length $d$ from $v_{uj}$ to $v_{xi}$. From the definition of $\pd^{j}$, it holds that ${\rho}_{j\rightarrow i,d}^{+}>0$ if $v_{xi}\in\pd^{j}$ and $0$ else. Similarly the total weight of negatively signed walks from input $v_{uj}$ to node $v_{xi}$ with length $d$ is \begin{equation}{\rho}_{j\rightarrow i,d}^{-}=\sum_{\pi\in \theta_{d}^{-}(v_{uj},v_{xi})}w(\pi),\end{equation} where $\theta_{d}^{-}(v_{uj},v_{xi})$ is the set of negative walks of length $d$ from $v_{uj}$ to $v_{xi}$ and it follows that ${\rho}_{j\rightarrow i,d}^{-}<0$ if $v_{xi}\in\nd^{j}$ and $0$ else. The weight of all walks from input $v_{uj}$ follows: 
  \begin{equation}{\rho}_{j\rightarrow i,d}={\rho}_{j\rightarrow i,d}^{+}+{\rho}_{j\rightarrow i,d}^{-}. \end{equation}
 Consider the example shown in Fig. \ref{fig:cancel}. 
  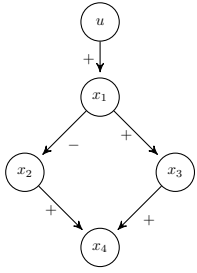
\begin{figure}[h]
 	\centering
 	\begin{tikzpicture}[
 	->,
 	>=stealth',
 	shorten >=2pt,
 	auto,
 	node distance=0.5cm,
 	every text node part/.style={align=center},
 	scale=0.25, every node/.style={scale=0.58}
 	]
 	\tikzstyle{every state}=[fill=none,draw=black,text=black]
 	\node[
 	state,
 	] (n0) at(0,0)
 	{$u$};
 	\node[
 	state,
 	] (n1) at(0,-4)
 	{$x_{1}$};
 	\node[
 	state,
 	] (n2) at(4,-8)
 	{$x_{3}$};
 	\node[
 	state,
 	] (n3) at(-4,-8)
 	{$x_{2}$};
 	\node[
 	state,
 	] (n4) at(0,-12)
 	{$x_{4}$};
 	\path (n0) edge [->] node [left] {$+$}(n1)
 	(n3) edge [->]  node [left] {$+$} (n4)
 	(n2) edge [->]  node {$+$} (n4)
 	(n1) edge [->] node [left] {$+$} (n2)
 	(n1) edge [->]  node {$-$} (n3);
 	\end{tikzpicture}
 	\caption{An example of a signed graph where $\nd$ and $\pd$ don't completely determine ${\rho}_{j\rightarrow i,d}$ }
 	\label{fig:cancel}
 \end{figure}

 The signed graph represents all systems of the form \begin{equation}
 \dot{\mathbf{x}}=\begin{bmatrix}
 0&0&0&0\\
 -\alpha_{1}&0&0&0\\
  \alpha_{2}&0&0&0\\
  0&\alpha_{3}&\alpha_{4}&0
 \end{bmatrix}\mathbf{x}+\begin{bmatrix}
 \beta_{1}\\
 0\\
 0\\
 0
 \end{bmatrix}\mathbf{u}
 \end{equation}
 where $\alpha_{1},\alpha_{2},\alpha_{3},\alpha_{4},\beta_{1}>0.$ Here the total walk weight to node $v_{x4}$ at length $2$ is 
 $\rho_{1\rightarrow4,2}=\beta_{1}\left(\alpha_{2}\alpha_{4}-\alpha_{1}\alpha_{3}\right),$ 
 the sign of which depends on the values of the various constants.
 
 The case where the sign of ${\rho}_{j\rightarrow i,d}$ is determined by $\nd^{j}$ and $\pd^{j}$ is shown in the following Lemmas. These Lemmas follow directly from the definitions of the sets $\pd^{j}$ and $\nd^{j}$ and as such are presented without proof.
\begin{lemma}\thlabel{lem:pdti}
	If $v_{xi}\in\pd^{j}\wedge v_{xi}\notin\nd^{j}$ then ${\rho}_{j\rightarrow i,d}>0$. 
\end{lemma}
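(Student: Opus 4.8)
The plan is to reduce the claim to the additive decomposition ${\rho}_{j\rightarrow i,d}={\rho}_{j\rightarrow i,d}^{+}+{\rho}_{j\rightarrow i,d}^{-}$ introduced just above the statement, and to read off the sign of each summand directly from the two hypotheses. The only fact I would establish first is that there is no internal cancellation inside ${\rho}_{j\rightarrow i,d}^{+}$: because $s(e)=\mathrm{sgn}(w(e))$ for every edge by construction of $G^{s}$ and $G^{w}$, the sign of any walk satisfies $s(\pi)=\mathrm{sgn}(w(\pi))$, so every positively signed walk contributes a strictly positive weight $w(\pi)>0$ and every negatively signed walk contributes $w(\pi)<0$. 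This alignment of sign and weight is what prevents two positive walks from cancelling one another, and it is the one genuinely load-bearing observation in the argument.

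Granting this, I would proceed in three short steps. First, from $v_{xi}\in\pd^{j}$ the set $\theta_{d}^{+}(v_{uj},v_{xi})$ is nonempty, and since every summand in ${\rho}_{j\rightarrow i,d}^{+}=\sum_{\pi\in\theta_{d}^{+}(v_{uj},v_{xi})}w(\pi)$ is strictly positive by the previous paragraph, the sum is strictly positive, so ${\rho}_{j\rightarrow i,d}^{+}>0$. Second, from $v_{xi}\notin\nd^{j}$ the set $\theta_{d}^{-}(v_{uj},v_{xi})$ is empty, so the corresponding sum is over no terms and ${\rho}_{j\rightarrow i,d}^{-}=0$. Third, substituting into the decomposition yields ${\rho}_{j\rightarrow i,d}={\rho}_{j\rightarrow i,d}^{+}+0>0$, which is exactly the claim.

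I do not expect any real obstacle here; the lemma is a direct consequence of the definitions, as the paper itself remarks. If anything deserves care it is the sign--weight alignment noted above, since the example of Fig.~\ref{fig:cancel} shows that once walks of opposite sign coexist the total weight ${\rho}_{j\rightarrow i,d}$ can take either sign. The hypothesis $v_{xi}\notin\nd^{j}$ is precisely what excludes that mixed case, forcing ${\rho}_{j\rightarrow i,d}^{-}$ to vanish and leaving the sign of ${\rho}_{j\rightarrow i,d}$ determined by the positive contribution alone.
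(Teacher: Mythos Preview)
Your proposal is correct and matches the paper's approach: the paper presents this lemma without proof, stating that it follows directly from the definitions of $\pd^{j}$ and $\nd^{j}$, and your argument is precisely that direct verification using the decomposition ${\rho}_{j\rightarrow i,d}={\rho}_{j\rightarrow i,d}^{+}+{\rho}_{j\rightarrow i,d}^{-}$ together with the paper's stated facts that ${\rho}_{j\rightarrow i,d}^{+}>0$ iff $v_{xi}\in\pd^{j}$ and ${\rho}_{j\rightarrow i,d}^{-}=0$ iff $v_{xi}\notin\nd^{j}$. Your explicit remark that $s(\pi)=\mathrm{sgn}(w(\pi))$ justifies why those facts hold and is a welcome clarification the paper leaves implicit.
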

\begin{lemma}\thlabel{lem:ndti}
	If $v_{xi}\in\nd^{j}\wedge v_{xi}\notin\pd^{j}$ then ${\rho}_{j\rightarrow i,d}<0$. 
\end{lemma}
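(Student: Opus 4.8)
The plan is to read the conclusion straight off the decomposition ${\rho}_{j\rightarrow i,d}={\rho}_{j\rightarrow i,d}^{+}+{\rho}_{j\rightarrow i,d}^{-}$, using the two sign facts that the text records immediately after defining $\rho^{+}$ and $\rho^{-}$. The argument is entirely symmetric to the one for \thref{lem:pdti}, with the roles of the positive and negative parts exchanged, so I would present it in three short steps.

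First I would use the hypothesis $v_{xi}\in\nd^{j}$ to conclude that the set $\theta_{d}^{-}(v_{uj},v_{xi})$ of negative length-$d$ walks is nonempty, and that every such walk $\pi$ satisfies $w(\pi)<0$. The latter is the only point requiring a word of justification: since each edge obeys $s(e)=\mathrm{sgn}(w(e))$ by construction of $G^{s}$ and $G^{w}$, we have $s(\pi)=\prod_{e\in\pi}s(e)=\prod_{e\in\pi}\mathrm{sgn}(w(e))=\mathrm{sgn}\!\left(\prod_{e\in\pi}w(e)\right)=\mathrm{sgn}(w(\pi))$, so a walk with $s(\pi)=-1$ necessarily has $w(\pi)<0$. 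Hence ${\rho}_{j\rightarrow i,d}^{-}$ is a nonempty sum of strictly negative terms, giving ${\rho}_{j\rightarrow i,d}^{-}<0$.

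Second I would use the hypothesis $v_{xi}\notin\pd^{j}$, which by definition means there is no positive length-$d$ walk from $v_{uj}$ to $v_{xi}$, i.e. $\theta_{d}^{+}(v_{uj},v_{xi})=\emptyset$. The empty sum then yields ${\rho}_{j\rightarrow i,d}^{+}=0$, matching the ``$0$ else'' clause recorded for $\rho^{+}$. Finally I would combine the two to get ${\rho}_{j\rightarrow i,d}={\rho}_{j\rightarrow i,d}^{+}+{\rho}_{j\rightarrow i,d}^{-}=0+{\rho}_{j\rightarrow i,d}^{-}<0$, which is the claim.

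There is no real obstacle here — this is exactly why the authors state that the lemma ``follow[s] directly from the definitions''; the only thing one must not overlook is the sign/weight consistency $\mathrm{sgn}(w(\pi))=s(\pi)$, which guarantees that negative walks cannot contribute weights of the wrong sign and that, under the stated hypotheses (membership in $\nd^{j}$ together with \emph{non}-membership in $\pd^{j}$), no cancellation of the kind illustrated in Fig.~\ref{fig:cancel} can occur.
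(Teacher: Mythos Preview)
Your argument is correct and is exactly the direct-from-definitions reasoning the paper has in mind; indeed the paper presents \thref{lem:ndti} (and \thref{lem:pdti}) without proof, stating that they ``follow directly from the definitions of the sets $\pd^{j}$ and $\nd^{j}$.'' Your explicit check that $\mathrm{sgn}(w(\pi))=s(\pi)$, together with the decomposition $\rho_{j\rightarrow i,d}=\rho_{j\rightarrow i,d}^{+}+\rho_{j\rightarrow i,d}^{-}$, is precisely what is needed.
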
 

It is possible to relate ${\rho}_{j\rightarrow i,d}$ with the system matrices $A,B$ and ultimately the controllability properties of the system.  

 Define a weighted adjacency matrix $\tilde{A}_{w}$ for $G_{x}^{w}$, where $(\tilde{A}_{w})_{i,j}=w((v_{xj},v_{xi}))$ if $(v_{xj},v_{xi})\in\mathcal{E}_{x}$ and $(\tilde{A}_{w})_{i,j}=0$ if not. Define a weighted adjacency matrix $\tilde{B}_{w}$ for $G_{u}^{w}$, where $(\tilde{B}_{w})_{i,j}=w((v_{uj},v_{xi}))$ if $(v_{uj},v_{xi})\in\mathcal{E}_{u}$ and $(\tilde{B}_{w})_{i,j}=0$ if not. Note that from the definition of the weight of an edge, $\tilde{A}_{w}=A$ and $\tilde{B}_{w}=B$. 
 \begin{lemma}\thlabel{lem:matmult} \begin{equation}(A^{d-1}B)_{i,j}={\rho}_{j\rightarrow i,d}.\end{equation}
 	\end{lemma}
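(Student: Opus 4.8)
The plan is to prove this by induction on the walk length $d$, exploiting the standard interpretation of powers of a weighted adjacency matrix as sums over walks. The crucial bookkeeping point is the convention fixed earlier, $(\tilde{A}_w)_{i,j} = w((v_{xj},v_{xi}))$ and $(\tilde{B}_w)_{i,j} = w((v_{uj},v_{xi}))$, so that the row index records the head and the column index the tail of each edge. This is exactly what makes a product of matrix entries read off the weights of a directed walk traversed in the forward direction from an input toward $v_{xi}$. Recall also that $\rho_{j\rightarrow i,d}$ is the full total weight $\rho_{j\rightarrow i,d}^{+}+\rho_{j\rightarrow i,d}^{-}$ summed over \emph{all} length-$d$ walks from $v_{uj}$ to $v_{xi}$, irrespective of sign, and that $w(\pi)=\prod_{e_{i}\in\pi}w(e_{i})$ is multiplicative.

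For the base case $d=1$, since $A^{0}B = B = \tilde{B}_w$, we have $(A^{0}B)_{i,j} = w((v_{uj},v_{xi}))$ when $(v_{uj},v_{xi})\in\mathcal{E}_{u}$ and $0$ otherwise. The only walk of length $1$ from $v_{uj}$ to $v_{xi}$ is the single edge $(v_{uj},v_{xi})$, when present, whose weight is precisely $w((v_{uj},v_{xi}))$; hence $\rho_{j\rightarrow i,1}$ equals this same quantity and the base case holds.

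For the inductive step, assume $(A^{d-1}B)_{k,j} = \rho_{j\rightarrow k,d}$ for every state index $k$. Expanding the matrix product gives $(A^{d}B)_{i,j} = \sum_{k} A_{i,k}\,(A^{d-1}B)_{k,j} = \sum_{k} w((v_{xk},v_{xi}))\,\rho_{j\rightarrow k,d}$, where terms with no edge $(v_{xk},v_{xi})$ drop out because $A_{i,k}=0$ there. Every walk of length $d+1$ from $v_{uj}$ to $v_{xi}$ decomposes uniquely as a length-$d$ walk from $v_{uj}$ to its penultimate node $v_{xk}$ followed by the edge $(v_{xk},v_{xi})$; by multiplicativity of $w(\pi)$, the weight of the extended walk equals $w((v_{xk},v_{xi}))$ times the weight of its length-$d$ prefix. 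Summing over the penultimate node $k$ and over all length-$d$ walks reaching it therefore recovers $\rho_{j\rightarrow i,d+1}$, which matches the displayed expression and closes the induction.

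The only genuine obstacle is keeping the head/tail indexing straight, so that matrix multiplication aggregates exactly the walk-weight products with neither double counting nor an inadvertent transposition. Once the convention $(\tilde{A}_w)_{i,j}=w((v_{xj},v_{xi}))$ is respected, this alignment is automatic, and the argument reduces to the familiar walk-weight identity for weighted digraphs applied to the composite path that begins with a single input edge contributed by $B$ and continues through $d-1$ state edges contributed by $A$.
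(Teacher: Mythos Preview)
Your proof is correct and follows essentially the same approach as the paper: induction on $d$, with the base case $d=1$ reading off $B$ directly, and the inductive step expanding $(A^{d-1}B)_{i,j}$ as $\sum_k A_{i,k}\rho_{j\rightarrow k,d-1}$ and identifying this with $\rho_{j\rightarrow i,d}$ via the unique decomposition of a length-$d$ walk into a length-$(d-1)$ prefix and a final edge. You are simply more explicit than the paper about the head/tail indexing convention and the uniqueness of the decomposition, but the argument is the same.
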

 \begin{proof}
 	The result will be shown via proof by induction on $d$. 
 	Consider the case of $d=1$. By the definition of the weight of an edge: 
 	$(B)_{i,j}={\rho}_{j\rightarrow i,1}.$ 
 	Consider the weight of all walks of length $d$ from an input $v_{uj}$ to a state node $v_{xi}$. By assumption, $(A^{d-2}B)_{i,j}={\rho}_{j\rightarrow i,d-1}$. Then 
 	\begin{equation}
 	(A^{d-1}B)_{i,j}=\sum_{k=1}^{n} (A)_{i,k}{\rho}_{j\rightarrow k,d-1}.
 	\end{equation} As a walk of length $d$ is the concatenation of a walk of length $d-1$ and a walk of length $1$, it follows from the definition of the weight of a walk that 
 	\begin{equation}
 	\sum_{k=1}^{n} (A)_{i,k}{\rho}_{j\rightarrow k,d-1}={\rho}_{j\rightarrow i,d}.
 	\end{equation}
 	\end{proof}
 
 	 	 Given the structure of $\mathcal{C}$, \thref{lem:matmult} shows that the herdability of the system in Equation~\eqref{eq:sys} is determined by walks on $G^{w}$ which have lengths from $1$ to $n$. Further:  
 \begin{lemma}\thlabel{lem:cont} $(\mathcal{C})_{i,(m(d-1)+j)}=\rho_{j\rightarrow i,d}$. \end{lemma}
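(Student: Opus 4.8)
The plan is to treat this as a direct bookkeeping consequence of the block structure of the controllability matrix together with \thref{lem:matmult}. Recall that $\mathcal{C}=\left[B,AB,A^{2}B,\dots,A^{n-1}B\right]$, where each block $A^{d-1}B$ is an $n\times m$ matrix. The blocks are concatenated horizontally, so the first block $B$ occupies columns $1$ through $m$, the second block $AB$ occupies columns $m+1$ through $2m$, and in general the $d$-th block $A^{d-1}B$ occupies the columns indexed from $m(d-1)+1$ through $m(d-1)+m=md$. The key step is to identify, within this layout, exactly which global column of $\mathcal{C}$ corresponds to the $j$-th column of the $d$-th block: counting $m(d-1)$ columns for the preceding $d-1$ blocks and then advancing $j$ columns into the current block places it at global index $m(d-1)+j$.

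With this indexing established, I would write
\begin{equation}
(\mathcal{C})_{i,(m(d-1)+j)}=(A^{d-1}B)_{i,j},
\end{equation}
which is just the statement that the entry in row $i$ and global column $m(d-1)+j$ of $\mathcal{C}$ equals the $(i,j)$ entry of the $d$-th block. Applying \thref{lem:matmult}, which gives $(A^{d-1}B)_{i,j}=\rho_{j\rightarrow i,d}$, immediately yields $(\mathcal{C})_{i,(m(d-1)+j)}=\rho_{j\rightarrow i,d}$, as claimed.

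There is no substantive obstacle here; the result is a short corollary of the preceding lemma. The only point that requires a moment of care is the column-indexing convention, namely verifying that the $j$-th column of $A^{d-1}B$ lands at position $m(d-1)+j$ rather than an off-by-one shift, and confirming that $d$ ranges over $1,\dots,n$ so that every column of $\mathcal{C}$ is accounted for exactly once. I expect this to be the entire content of the proof, and I would phrase it as a one- or two-sentence argument citing \thref{lem:matmult} directly.
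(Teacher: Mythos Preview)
Your proposal is correct and follows essentially the same approach as the paper: identify that columns $m(d-1)+1$ through $md$ of $\mathcal{C}$ form the block $A^{d-1}B$, then invoke \thref{lem:matmult}. The paper's version is slightly terser but the argument is identical.
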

 \begin{proof}
 	From \thref{lem:matmult},
 	\begin{equation}
 	(A^{d-1}B)_{i,j}={\rho}_{j\rightarrow i,d}. 
 	\end{equation}
 	From the definition of the controllability matrix, the sub-matrix
 	\begin{equation}
 	(\mathcal{C})_{:,m(d-1)+1:md}=A^{d-1}B.
 	\end{equation}

 	The result follows.
 \end{proof}

\section{A Necessary Condition for Complete Herdability}\label{sec:nes}

This section shows how graph structure and system herdability are related by providing a necessary condition for complete herdability of a system known as input connectability. It also explores some examples that show why input connectability is only a necessary condition. 
\begin{definition}A graph is input connectable if \begin{equation} \bigcup_{v_{uj}\in\mathcal{V}_u} \mathscr{R}_{j}=\mathcal{V}_{x}, \end{equation} where $\mathscr{R}_{j}$ is the set of nodes reachable from $v_{uj}$:  $\mathscr{R}_{j}=\{v_{xi}\in\mathcal{V}_{x} \ | \ v_{uj} \rightarrow v_{xi}\}$.\end{definition}

 Input connectability is an important indicator of the ability to control a system and is a necessary condition for structural controllability\cite{Lin1974} and sign controllability \cite{tsatsomeros1998sign}. To see how input connectability impacts herdability requires the following extension of \thref{lem:herdset}.

\begin{lemma}\thlabel{lem:nodeherd}
	A state $i$ is herdable if and only if $\exists j$ such that $$(\mathcal{C})_{i,j}\neq0.$$  
\end{lemma}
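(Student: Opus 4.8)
The plan is to obtain this statement as the singleton case of \thref{lem:herdset}. Setting $\mathcal{X}=\{i\}$, that theorem says state $i$ is herdable if and only if there exists a vector $\mathbf{k}\in\mathrm{range}(\mathcal{C})$ with $\mathbf{k}_{i}>0$. So the whole task reduces to a linear-algebra equivalence: such a $\mathbf{k}$ exists if and only if the $i$-th row of $\mathcal{C}$ is nonzero, i.e. $(\mathcal{C})_{i,j}\neq 0$ for some $j$. I would prove the two directions of this equivalence separately.

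For the forward direction (herdable $\Rightarrow$ nonzero row), suppose a qualifying $\mathbf{k}$ exists. Since $\mathbf{k}\in\mathrm{range}(\mathcal{C})$, there is a $\pmb{\alpha}$ with $\mathcal{C}\pmb{\alpha}=\mathbf{k}$, and expanding the $i$-th component gives $\mathbf{k}_{i}=\sum_{j}(\mathcal{C})_{i,j}\pmb{\alpha}_{j}>0$. A sum that is strictly positive cannot have every summand equal to zero, so at least one $(\mathcal{C})_{i,j}$ is nonzero, which is exactly the claimed condition.

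For the reverse direction (nonzero row $\Rightarrow$ herdable), I would argue constructively. Suppose $(\mathcal{C})_{i,j}\neq 0$ for some index $j$. Choose $\pmb{\alpha}=\mathrm{sgn}((\mathcal{C})_{i,j})\,\mathbf{e}_{j}$, where $\mathbf{e}_{j}$ is the $j$-th standard basis vector. Then $\mathbf{k}=\mathcal{C}\pmb{\alpha}=\mathrm{sgn}((\mathcal{C})_{i,j})\,(\mathcal{C})_{:,j}$ lies in $\mathrm{range}(\mathcal{C})$ and satisfies $\mathbf{k}_{i}=\mathrm{sgn}((\mathcal{C})_{i,j})\,(\mathcal{C})_{i,j}=|(\mathcal{C})_{i,j}|>0$. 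Invoking \thref{lem:herdset} with $\mathcal{X}=\{i\}$ then yields herdability of state $i$.

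I do not expect a genuine obstacle here, since the result is essentially a direct corollary of \thref{lem:herdset}; the only point worth emphasizing is why a \emph{single} nonzero entry suffices. Because herdability of one state constrains only the one component $\mathbf{k}_{i}$, we are free to ignore the signs of all other components of the column $(\mathcal{C})_{:,j}$. This is in contrast to the completely-herdable setting of \thref{th:ccond}, where a unisigned column was required so that positivity could be enforced simultaneously across every coordinate; here no such cancellation concern arises, so selecting the appropriate sign on a single column is all that is needed.
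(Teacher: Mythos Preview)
Your proposal is correct and follows essentially the same route as the paper: both directions hinge on the elementary linear-algebra fact that the $i$-th row of $\mathcal{C}$ is nonzero if and only if some vector in $\mathrm{range}(\mathcal{C})$ has positive $i$-th component, combined with \thref{lem:herdset} applied to $\mathcal{X}=\{i\}$. The only cosmetic difference is that for the ``herdable $\Rightarrow$ nonzero row'' direction the paper argues by contrapositive directly through \thref{lem:rceq} (zero row forces $\mathbf{x}(t)_i\equiv 0$ from the origin), whereas you expand the sum $\mathbf{k}_i=\sum_j(\mathcal{C})_{i,j}\pmb{\alpha}_j$; both are immediate.
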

\begin{proof}
	($(\mathcal{C})_{i,j}\neq0\Rightarrow$ $i$ Herdable) If $(\mathcal{C})_{i,j}\neq0$ then by appropriate choice of the $j$-th element of a vector $\mathbf{z}$, $(\mathcal{C}\mathbf{z})_{i}=w$ for a positive constant $w$. Then there is a vector $\mathbf{k}\in \mathrm{range}(\mathcal{C})$ with $\mathbf{k}_{i}>0$ and $v_{xi}$ is herdable by \thref{lem:herdset}.
	
	(Herdable $\Rightarrow(\mathcal{C})_{i,j}\neq0$ ) Suppose the contrary. Then by assumption $\forall j \  (\mathcal{C})_{i,j}=0$. Consider making $\mathbf{x}(t)\geq h$ from an initial state $\mathbf{x}(0)=\mathbf{0}_{n}$. As $\forall j \ (\mathcal{C})_{i,j}=0$, it holds that $\forall z\in \mathrm{range}(\mathcal{C}), \ z_{i}=0$ and by \thref{lem:rceq} for any reachable $\mathbf{x}(t) \ \forall t\geq 0$, $\mathbf{x}(t)_{i}=0$ and state $i$ is not herdable. 
\end{proof}

 \begin{theorem}\thlabel{th:incon}
 	If a system is completely herdable, then it is input connectable. 
 \end{theorem}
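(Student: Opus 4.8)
The plan is to chain the node-level herdability criterion of \thref{lem:nodeherd} with the walk interpretation of the controllability matrix from \thref{lem:cont}, and then simply read off graph reachability; the statement is a direct forward implication, so I would argue it constructively rather than by contradiction. First I would unpack complete herdability. By definition the set $\mathcal{X}=\{1,2,\dots,n\}$ is herdable, so \thref{lem:herdset} supplies a single $\mathbf{k}\in\mathrm{range}(\mathcal{C})$ with $\mathbf{k}_i>0$ for every $i$; in particular each individual state $i$ is herdable, and hence \thref{lem:nodeherd} applies to every $i$, yielding for each state $i\in\{1,2,\dots,n\}$ a column index $\ell$ with $(\mathcal{C})_{i,\ell}\neq 0$.

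Next I would decode that nonzero entry into a walk. Writing $\ell=m(d-1)+j$ with $1\le j\le m$ and $1\le d\le n$, \thref{lem:cont} gives $(\mathcal{C})_{i,\ell}=\rho_{j\rightarrow i,d}$. Since $\rho_{j\rightarrow i,d}=\rho^{+}_{j\rightarrow i,d}+\rho^{-}_{j\rightarrow i,d}\neq 0$, at least one summand is nonzero, so the corresponding set of (positive or negative) walks of length $d$ from $v_{uj}$ to $v_{xi}$ is nonempty. Thus $\theta(v_{uj},v_{xi})\neq\emptyset$, i.e. $v_{uj}\rightarrow v_{xi}$, which is exactly $v_{xi}\in\mathscr{R}_{j}$.

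Finally I would assemble the node-by-node conclusions: the previous step produces, for each state $v_{xi}$, an input $v_{uj}$ from which $v_{xi}$ is reachable, so $v_{xi}\in\bigcup_{v_{uj}\in\mathcal{V}_u}\mathscr{R}_{j}$. As $i$ was arbitrary, $\mathcal{V}_x\subseteq\bigcup_j\mathscr{R}_j$, and the reverse inclusion is immediate from $\mathscr{R}_j\subseteq\mathcal{V}_x$; hence the graph is input connectable.

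The only delicate point is the implication ``$\rho_{j\rightarrow i,d}\neq 0\Rightarrow$ a walk exists.'' This is the easy direction of the walk-weight/walk-existence correspondence: cancellation (as in Fig.~\ref{fig:cancel}) can force $\rho$ to vanish even when walks are present, but it can never render $\rho$ nonzero in the \emph{absence} of any walk, because $\rho$ is by definition a sum over the walk set, which would then be empty. Everything else is routine bookkeeping on the column index of $\mathcal{C}$ via the identification $\ell=m(d-1)+j$.
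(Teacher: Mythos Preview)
Your argument is correct and uses exactly the same two ingredients as the paper, \thref{lem:nodeherd} and \thref{lem:cont}; the only difference is that you run the implication forward (nonzero $\rho_{j\rightarrow i,d}$ forces a walk, hence reachability) while the paper argues the contrapositive (no walk to $v_{xi}$ forces $(\mathcal{C})_{i,:}=\mathbf{0}$, hence $v_{xi}$ not herdable). The detour through \thref{lem:herdset} to produce $\mathbf{k}$ is unnecessary---complete herdability already gives herdability of each singleton directly---but it does no harm.
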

 \begin{proof}
 	Suppose the system is not input connectable. Then by assumption, there exists a node $v_{xi}$ such that $v_{xi}\notin\bigcup_{j}\mathscr{R}_{j}$ and as such there is no walk from an input to $v_{xi}$.
 	If there is no walk to $v_{xi}$, then $(\mathcal{C})_{i,:}=\mathbf{0}_n$ by \thref{lem:cont} and the node will not be herdable by \thref{lem:nodeherd}. As such, the system is not completely herdable. 
 \end{proof}
 
  Consider the following two examples which illustrate why input connectability is only a necessary condition for herdability. 
  The first example considers the signed dilation, shown in Fig.~\ref{fig:exnec}.
  	 		 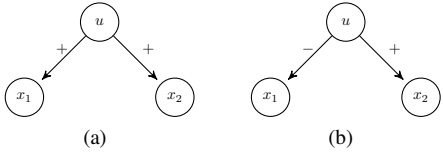
\begin{figure}[h]  	  
  	\centering
  	\subfloat[]{\label{subfig:nex1}	\begin{tikzpicture}[
  		->,
  		>=stealth',
  		shorten >=2pt,
  		auto,
  		node distance=0.5cm,
  		every text node part/.style={align=center},
  		scale=0.25, every node/.style={scale=0.58}
  		]
  		\tikzstyle{every state}=[fill=none,draw=black,text=black]
  		\node[
  		state,
  		] (n0) at(0,0)
  		{$u$};
  		\node[
  		state,
  		] (n1) at(-4,-4)
  		{$x_{1}$};
  		\node[
  		state,
  		] (n2) at(4,-4)
  		{$x_{2}$};
  		\path (n0) edge [->] node [left,above] {$+$}(n1)
  		(n0) edge [->]  node {$+$} (n2);
  		\end{tikzpicture}} \hspace{.5cm}
  	\subfloat[]{\label{subfig:nex2}	\begin{tikzpicture}[
  		->,
  		>=stealth',
  		shorten >=2pt,
  		auto,
  		node distance=0.5cm,
  		every text node part/.style={align=center},
  		scale=0.25, every node/.style={scale=0.58}
  		]
  		\tikzstyle{every state}=[fill=none,draw=black,text=black]
  		\node[
  		state,
  		] (n0) at(0,0)
  		{$u$};
  		\node[
  		state,
  		] (n1) at(-4,-4)
  		{$x_{1}$};
  		\node[
  		state,
  		] (n2) at(4,-4)
  		{$x_{2}$};
  		\path (n0) edge [->] node [left,above] {$-$}(n1)
  		(n0) edge [->]  node {$+$} (n2);
  		\end{tikzpicture}} 
  	\caption{The Signed Dilation: The systems represented by the graph structure in \ref{subfig:nex1} are completely herdable, while \ref{subfig:nex2} shows a graph structure that is never completely herdable.}
  	  		\label{fig:exnec}
  \end{figure}
Fig.~\ref{subfig:nex1} represents systems of the form
\begin{equation}
\dot{\mathbf{x}}=
\begin{bmatrix}
0 & 0 \\
0 & 0  
\end{bmatrix}\mathbf{x} +\begin{bmatrix}
\beta_{1}   \\
\beta_{2}  
\end{bmatrix}\mathbf{u}
\end{equation}
where
$\beta_{1},\beta_{2}>0$,
which gives a controllability matrix:
\begin{equation}
\mathcal{C}=\begin{bmatrix}
\beta_{1} & 0 \\
\beta_{2} & 0 
\end{bmatrix},
\end{equation}
with
\begin{equation}
\mathrm{range}(\mathcal{C})=span\left(\left\{ \begin{bmatrix}
\beta_{1}  \\ 
\beta_{2} 
\end{bmatrix} \right\}\right).
\end{equation}
This system is always completely herdable. On the other hand, Fig.~\ref{subfig:nex2} can be translated to systems of the form: \begin{equation}
\dot{\mathbf{x}}=
\begin{bmatrix}
0 & 0 \\
0 & 0  
\end{bmatrix}\mathbf{x} +\begin{bmatrix}
-\beta_{1}   \\
\beta_{2}  
\end{bmatrix}\mathbf{u}
\end{equation}
where
$\beta_{1},\beta_{2}>0$.
This gives a controllability matrix:
\begin{equation}
\mathcal{C}=\begin{bmatrix}
-\beta_{1} & 0 \\
\beta_{2} & 0 
\end{bmatrix}, 
\end{equation}
with
\begin{equation}
\mathrm{range}(\mathcal{C})=span\left(\left\{ \begin{bmatrix}
-\beta_{1}  \\ 
\beta_{2} 
\end{bmatrix} \right\}\right).
\end{equation}
Here either $v_{x1}$ or $v_{x2}$ can be made larger than all thresholds $h\geq 0$ but not both.  This example illustrates a fundamental trade off when herding signed digraphs, which is that at a given distance from the input either $\nd$ or $\pd$ can be herded but not both. Note that this is a loss of symmetry with respect to the input that causes a loss of herdability. In the language of social networks, it is not possible to simultaneously convince an enemy and a friend.  

  It turns out that Fig.~\ref{subfig:nex1} is an example of a positive system. A system is positive if an element-wise non-negative initial state under element-wise non-negative control input remains element-wise non-negative \cite{Farina2011}. A positive system is excitable if and only if each state variable can be made positive by applying an appropriate nonnegative input to the system initially at rest [$\mathbf{x}(0)=\mathbf{0}_{n}$] \cite{Farina2011}. In the case of a positive system, input connectability is a necessary and sufficient condition for complete herdability. 
 
 \begin{theorem}\thlabel{th:posin}
 	A positive linear system is completely herdable if and only if it is input connectable. 
 \end{theorem}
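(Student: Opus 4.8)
The plan is to prove the two directions separately, since one of them is already available. For the necessity direction (complete herdability $\Rightarrow$ input connectability) I would simply invoke \thref{th:incon}, which was established for an arbitrary linear system and therefore applies verbatim to a positive one; no new argument is needed. All the work is in the sufficiency direction, and the strategy there is to produce an element-wise positive vector in $\mathrm{range}(\mathcal{C})$ and then close with \thref{cor:krange}.

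To build that vector I would exploit positivity of the system, namely that $A$ is Metzler and $B$ is element-wise non-negative, so that $e^{At}\geq 0$ for every $t\geq 0$. For each state $i$ I would construct a \emph{single} reachable vector that is non-negative in every component and strictly positive in component $i$. Concretely, input connectability guarantees that $v_{xi}$ is reachable from some input $v_{uj}$; driving channel $j$ with a constant non-negative input from rest produces the state $x^{(i)}(t)=\int_{0}^{t}e^{A(t-\tau)}B_{:,j}\,d\tau$, which is non-negative by positivity of the system and satisfies $x^{(i)}(t)_i=\int_0^t\sum_k (e^{A(t-\tau)})_{i,k}B_{k,j}\,d\tau>0$ for $t$ large enough. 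The strict positivity holds because a graph walk from $v_{uj}$ to $v_{xi}$ forces some intermediate state $k$ with $B_{k,j}>0$ and a state-graph path from $k$ to $i$, so that $(e^{A(t-\tau)})_{i,k}B_{k,j}>0$. This is exactly the excitability characterization for positive systems recorded just before the theorem.

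With the family $\{x^{(i)}\}_{i=1}^{n}$ in hand the conclusion is immediate. Each $x^{(i)}$ lies in $\mathcal{R}[0,t]=\mathrm{range}(\mathcal{C})$ by \thref{lem:rceq}, and since $\mathrm{range}(\mathcal{C})$ is a subspace it also contains $k=\sum_{i=1}^{n}x^{(i)}$. Because every $x^{(i)}$ is non-negative and $x^{(i)}_i>0$, we get $k_i\geq x^{(i)}_i>0$ for all $i$, so $k$ is element-wise positive. Then \thref{cor:krange} yields complete herdability. Note that working from rest is only a device to manufacture vectors in $\mathrm{range}(\mathcal{C})$; the corollary already absorbs the requirement of herding from an arbitrary initial condition.

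I expect the main obstacle to be rigorously justifying the \emph{strict} positivity in the per-state construction, i.e. that the purely graph-theoretic walk condition of input connectability yields a strictly positive entry of $e^{At}B$ rather than merely a non-negative one. The crux is that in a positive system no cancellation can occur along non-negative trajectories: every walk from an input to state $i$ contributes a non-negative amount and at least one contributes a positive amount, so graph-level reachability cannot be washed out the way it can in the signed setting (cf. the cancellation in Fig.~\ref{fig:cancel}). Once this is pinned down, additivity of the reachable subspace makes the simultaneous herding of all states routine.
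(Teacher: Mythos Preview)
Your proposal is correct and follows essentially the same route as the paper: necessity via \thref{th:incon}, and sufficiency by producing an element-wise positive vector in $\mathrm{range}(\mathcal{C})$ and then invoking \thref{cor:krange}. The only difference is cosmetic: the paper cites the excitability result for positive systems (Theorem~8 of \cite{Farina2011}) as a black box to obtain that vector, whereas you effectively re-derive excitability from input connectability and make the summation step $k=\sum_i x^{(i)}$ explicit.
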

 \begin{proof}
 	(Sufficiency) By Theorem $8$ of \cite{Farina2011}, an input connectable, positive linear system is excitable. Then there is an element-wise positive vector in the reachable subspace, which is also the range of the controllability matrix by \thref{lem:rceq}. Then by \thref{cor:krange}, the system is completely herdable.
 	 
 	(Necessity) Follows from \thref{th:incon}.
 \end{proof}
 
 The second example that shows why input connectability is only a necessary condition and not a sufficient condition can be seen based on the weighted graph $G^{w}$, specifically the cancellation of walk weights from an input to a state node. It is possible that if a node is included in both $\nd^{j}$ and $\pd^{j}$ that there be a combination of weights such that  ${\rho}_{j\rightarrow i,d}=0$. If the only walks to $v_{xi}$ are of length $d$ then the node $v_{xi}$ is not herdable, as is the case for $v_{x4}$ in Fig.~ \ref{fig:cancel}.

\section{Graph Structure Based Herdability}\label{sec:class}
This section, in contrast to Section \ref{sec:nes}, will use the graph characterization of a system to discuss sufficient conditions for herdability, specifically extending \thref{th:ccond} to consider the underlying graph structure of the system. 

\begin{theorem}\thlabel{th:all}
	If for each $v_{xi}\in\mathcal{V}_{x}$, there exists a distance $d$ and an input $v_{uj}$ such that $v_{xi}\in\mathcal{N}_{d}^{j}\cup\mathcal{P}_{d}^{j}$ and $\mathcal{N}_{d}^{j}=\emptyset \veebar \mathcal{P}_{d}^{j}=\emptyset$ then the system is completely sign herdable. 
\end{theorem}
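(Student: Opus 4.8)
The plan is to reduce the hypothesis directly to the sufficient condition of \thref{th:ccond}. For each state $v_{xi}$, let $(d,j)$ be the witnessing pair supplied by the hypothesis and consider the column $c=m(d-1)+j$ of the controllability matrix. By \thref{lem:cont}, the entries of this column are exactly the walk-weight sums $(\mathcal{C})_{k,c}=\rho_{j\rightarrow k,d}$ for $k=1,\dots,n$, so everything reduces to controlling the signs of these quantities.

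First I would split the exclusive-or hypothesis $\mathcal{N}_{d}^{j}=\emptyset \veebar \mathcal{P}_{d}^{j}=\emptyset$ into its two cases. If $\mathcal{N}_{d}^{j}=\emptyset$, then no state node is reached from $v_{uj}$ by a negative length-$d$ walk, so for every $k$ we have either $v_{xk}\in\mathcal{P}_{d}^{j}$, giving $\rho_{j\rightarrow k,d}>0$ by \thref{lem:pdti}, or no length-$d$ walk at all, giving $\rho_{j\rightarrow k,d}=0$. The whole column is thus element-wise nonnegative, hence unisigned. The case $\mathcal{P}_{d}^{j}=\emptyset$ is symmetric, yielding an element-wise nonpositive (again unisigned) column via \thref{lem:ndti}.

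Next I would invoke the membership hypothesis $v_{xi}\in\mathcal{N}_{d}^{j}\cup\mathcal{P}_{d}^{j}$: since exactly one of the two sets is empty, $v_{xi}$ lies in the nonempty one, and the corresponding Lemma (\thref{lem:pdti} or \thref{lem:ndti}) forces $\rho_{j\rightarrow i,d}\neq0$, i.e. $(\mathcal{C})_{i,c}\neq0$. Hence for every state $i$ there is a column of $\mathcal{C}$ that is simultaneously unisigned and nonzero in row $i$, which is precisely the premise of \thref{th:ccond}, so the system is completely herdable.

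Finally, to reach complete \emph{sign} herdability I would note that $\mathcal{N}_{d}^{j}$ and $\mathcal{P}_{d}^{j}$ depend only on $G^{s}$, and that under the exclusive-or hypothesis all length-$d$ walks reaching any given node share a common sign; therefore $\rho_{j\rightarrow i,d}$ cannot vanish by cancellation for any weight assignment consistent with $G^{s}$, and the sign of each relevant entry of $\mathcal{C}$ is fixed across the whole class. The one point that needs care is exactly this: the exclusive-or is what rules out the mixed-sign situation of Fig.~\ref{fig:cancel}, in which a node lies in $\mathcal{N}_{d}^{j}\cap\mathcal{P}_{d}^{j}$ and $\rho_{j\rightarrow i,d}$ can be driven to zero. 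Once that obstacle is cleared, the reduction to \thref{th:ccond} applies uniformly to every system sharing the sign structure $G^{s}$.
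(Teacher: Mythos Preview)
Your proposal is correct and follows essentially the same route as the paper: identify, for each state, the column $m(d-1)+j$ of $\mathcal{C}$ via \thref{lem:cont}, use the exclusive-or hypothesis together with \thref{lem:pdti} and \thref{lem:ndti} to show that this column is unisigned with a nonzero entry in row $i$, invoke \thref{th:ccond}, and then observe weight-independence to conclude sign herdability. If anything, you are slightly more explicit than the paper in spelling out the two cases of the exclusive-or and why $(\mathcal{C})_{i,c}\neq0$, but the argument is the same.
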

\begin{proof}
	Consider the herdability of a node $v_{xi}$ which satisfies $v_{xi}\in\mathcal{N}_{d}^{j}\cup\mathcal{P}_{d}^{j}$ and $\mathcal{N}_{d}^{j}=\emptyset \veebar \mathcal{P}_{d}^{j}=\emptyset$ for some $d$ and $v_{uj}$. As $\mathcal{N}_{d}^{j}=\emptyset \veebar \mathcal{P}_{d}^{j}=\emptyset$, it must be that all nodes at distance $d$ from the input $v_{uj}$ are all in either $\mathcal{N}_{d}^{j}$ or $\mathcal{P}_{d}^{j}$. Then by \thref{lem:pdti} and \thref{lem:ndti}, all nonzero elements of $(\mathcal{C})_{:,m(d^{i}-1)+j^{i}}$ have the same sign. As this hold for all $v_{xi}$, the system is completely herdable by \thref{th:ccond}. As the herdability does not depend on the edge weights, this holds for all systems with the same sign pattern and the system is sign herdable. 
\end{proof}

	If a system is not completely herdable, it is still possible to control a subset of the system nodes to enter the set $\mathcal{H}_{d}$. One such case is systems with an underlying graph that is a rooted out-branching. In fact in such a system, it is possible to completely characterize the nodes that can be herded based on the underlying graph structure. A directed graph, $\hat{\mathcal{G}}=(\hat{\mathcal{V}},\hat{\mathcal{E}})$ is a rooted out-branching if it has a root node $v_{i}\in\hat{\mathcal{V}}$ such that for every other node $v_{j}\in\hat{\mathcal{V}}$ there is a single directed walk from $v_{i}$ to $v_{j}$. The case considered here is that of  a single input, input rooted out-branching, which means that every node $v_{xi}\in\hat{\mathcal{V}}_{x}$ has a single in-bound walk from the single input $v_{u}$. The unique walk from $v_{u}$ to $v_{xi}$ in the input-rooted out-branching will be referred to as $\pi_{t}(v_{u},v_{xi})$. Consider the maximum walk length between $v_{u}$ and a state node, which is  
	 $d_{\max}=\max_{v_{xi}\in\hat{\mathcal{V}}_{x}} \mathrm{len}(\pi_{t}(v_{u},v_{xi})).$ 
	  Let $\mathbb{H}_{u}$ be the set of nodes made larger than some lower threshold $h\geq0$ via a signal from the input $v_{u}$. 
	\begin{theorem}\thlabel{th:outtree}
		In an input rooted, out-branching, $\mathbb{H}_{u}$ follows 
		$
		\mathbb{H}_{u}=\bigcup_{d=1}^{d_{\max}} \mathcal{X}_{d},
		$
		 where $\mathcal{X}_{d}\in\{\pd,\nd,\emptyset\}$. 
	\end{theorem}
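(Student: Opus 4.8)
The plan is to recast the statement entirely in terms of the controllability matrix $\mathcal{C}$ and then exploit the defining feature of an input-rooted out-branching, namely that each state node is reached by a \emph{unique} walk from $v_u$. First I would record the structural consequence of this uniqueness. Since $v_{xi}$ admits only the walk $\pi_t(v_u,v_{xi})$, of length $\ell_i:=\mathrm{len}(\pi_t(v_u,v_{xi}))$, \thref{lem:cont} with $m=1$ gives $(\mathcal{C})_{i,d}=\rho_{1\rightarrow i,d}$, and this weight vanishes for every $d\neq\ell_i$ because no walk of that length exists, while $(\mathcal{C})_{i,\ell_i}=w(\pi_t(v_u,v_{xi}))\neq 0$. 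Hence every row of $\mathcal{C}$ has exactly one nonzero entry, located in column $\ell_i\leq d_{\max}$, with sign equal to $s(\pi_t(v_u,v_{xi}))$. Uniqueness also forces $\pd\cap\nd=\emptyset$, so by \thref{lem:pdti} and \thref{lem:ndti} the node lies in $\pd$ precisely when that entry is positive and in $\nd$ precisely when it is negative; the classes $\pd,\nd$ over $d=1,\dots,d_{\max}$ thus partition $\mathcal{V}_x$.

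Next I would describe the positive support of the elements of $\mathrm{range}(\mathcal{C})$. Writing an arbitrary $\mathbf{k}=\mathcal{C}\pmb{\alpha}\in\mathrm{range}(\mathcal{C})$ and invoking the single-nonzero-per-row structure, the $i$-th coordinate collapses to $\mathbf{k}_i=\pmb{\alpha}_{\ell_i}\,w(\pi_t(v_u,v_{xi}))$, where crucially every node at a common distance $d$ multiplies the \emph{same} coefficient $\pmb{\alpha}_d$. Therefore, among the distance-$d$ nodes, $\mathbf{k}_i>0$ holds for exactly $\pd$ when $\pmb{\alpha}_d>0$, for exactly $\nd$ when $\pmb{\alpha}_d<0$, and for none when $\pmb{\alpha}_d=0$. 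Collecting over $d$, the positive support $\{i:\mathbf{k}_i>0\}$ is precisely $\bigcup_{d=1}^{d_{\max}}\mathcal{X}_d$ with $\mathcal{X}_d\in\{\pd,\nd,\emptyset\}$ fixed by $\mathrm{sgn}(\pmb{\alpha}_d)$, and these sign choices are mutually independent across distances.

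Finally I would close the loop through \thref{lem:herdset}. A herding signal from $v_u$ realizes a direction $\mathbf{k}\in\mathrm{range}(\mathcal{C})$, and exactly the nodes in its positive support can be driven above an arbitrary threshold $h$; conversely, given any target family $\{\mathcal{X}_d\}$ one builds a $\pmb{\alpha}$ with the prescribed signs so that the associated $\mathbf{k}$ has positive support $\bigcup_d\mathcal{X}_d$, which is then herdable by \thref{lem:herdset}. This yields both that $\mathbb{H}_u$ must take the asserted form and that every such form is realizable. The substantive step, and the main obstacle, is the decoupling of the second paragraph: it rests squarely on the out-branching hypothesis providing a unique walk, which simultaneously yields a single nonzero per row and rules out the weight cancellation of Fig.~\ref{fig:cancel}, and which is exactly what enforces the per-distance ``either $\pd$ or $\nd$'' trade-off already visible in the signed dilation of Fig.~\ref{subfig:nex2}. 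Along the way I would verify the minor points that $\ell_i$ is well defined for every node and that columns $d>d_{\max}$ contribute nothing, both immediate from the branching structure.
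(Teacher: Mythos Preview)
Your proposal is correct and follows essentially the same approach as the paper: both arguments exploit the unique-walk property of the out-branching to show each row of $\mathcal{C}$ has a single nonzero entry in column $\ell_i$, then use the sign of $\pmb{\alpha}_d$ to select $\pd$, $\nd$, or $\emptyset$ at each distance, invoking \thref{lem:herdset} to conclude. Your write-up is somewhat more careful than the paper's in explicitly arguing the converse (that every $\mathbb{H}_u$ must have this form, not merely that each such union is realizable) via the positive-support characterization of $\mathrm{range}(\mathcal{C})$.
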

	\begin{proof}
		 Consider the ability to herd a node $v_{xi}$ and assume that $\mathrm{len}(\pi_{t}(v_{u},v_{xi}))=d_{i}$. As there is only one walk from $v_{u}$ to $v_{xi}$ it holds that $\mathcal{C}_{i,d}=0, \ \forall d\neq d_{i}\in\mathcal{D}.$~Then $\mathcal{C}_{:, d}$ uniquely determines the ability to herd all nodes at distance $d$. If $\alpha_{d}=1$ then $(\mathcal{C}_{:, d}\alpha_{d})_{i}>0, \ \forall i \text{ such that } v_{xi}\in\pd[d_{i}]$ and $\pd[d_{i}]$ is herdable by \thref{lem:herdset}. If $\alpha_{d}=-1$ then  $(\mathcal{C}_{:, d}\alpha_{d})_{j}>0, \ \forall j \text{ such that } v_{xi}\in\nd[d_{i}]$ and $\nd[d_{i}]$ is herdable by \thref{lem:herdset}. Finally if $\alpha_{d}=0$ then  $\mathcal{C}_{:, d}\alpha_{d}=\mathbf{0}_{n}$ and no nodes are herded. 
		
		Construct a vector $\pmb{\alpha}\in\mathbb{R}^{n}$ where $\forall d\in\{1,2,\dots,d_{\text{max}}\} $
		\begin{equation} \pmb{\alpha}_{d}=\begin{cases}1 \text{ \ \ so that } \mathcal{X}_d=\pd,\\ -1 \text{ so that } \mathcal{X}_d=\nd, \\ 0 \text{\ \ \ so that } \mathcal{X}_d=\emptyset,
		\end{cases}
		\end{equation} and where the remaining $n-d_{\max}$ elements are $0$. Then $\mathcal{C}\pmb{\alpha}$ shows the herdability of the set of nodes $\bigcup_{d=1}^{d_{\max}} \mathcal{X}_{d}$. 
	\end{proof}
	\begin{corollary}\thlabel{th:size}
		The maximal collection of nodes, $\mathbb{H}_{u}^{\ast}$, that can be herded in a input rooted out-branching satisfies \begin{equation}
		|\mathbb{H}_{u}^{\ast}|=\sum_{l=1}^{d_\text{max}}\max(|\mathcal{N}_{l}|,|\mathcal{P}_{l}|).
		\end{equation}
		
	\end{corollary}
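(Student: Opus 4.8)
The plan is to read off the result as a counting consequence of \thref{th:outtree}, which already gives the exact form $\mathbb{H}_{u}=\bigcup_{d=1}^{d_{\max}}\mathcal{X}_{d}$ with each $\mathcal{X}_{d}\in\{\pd,\nd,\emptyset\}$. First I would record that this union is \emph{disjoint}. Since the graph is an out-branching, each state node $v_{xi}$ is reached from the root $v_{u}$ by the single walk $\pi_{t}(v_{u},v_{xi})$, whose length $d_{i}$ is therefore unique; hence the distance classes indexed by different $d$ are pairwise disjoint. At a fixed distance $d$ the sign of this unique walk is also fixed, so $\pd$ and $\nd$ partition the distance-$d$ nodes and $\pd\cap\nd=\emptyset$. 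It follows that for any admissible choice of the sets $\mathcal{X}_{d}$ one has $|\mathbb{H}_{u}|=\sum_{d=1}^{d_{\max}}|\mathcal{X}_{d}|$.

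Next I would argue that maximizing $|\mathbb{H}_{u}|$ decomposes into independent per-distance maximizations. As in the proof of \thref{th:outtree} (with $m=1$ the relevant column is $\mathcal{C}_{:,d}$), the set $\mathcal{X}_{d}$ is selected entirely by the sign $\pmb{\alpha}_{d}\in\{+1,-1,0\}$ assigned to column $\mathcal{C}_{:,d}$, and by \thref{lem:cont} this column has nonzero entries only in the rows of distance-$d$ nodes. Because these row supports are disjoint across $d$, the choices at different distances never interfere, so the $\pmb{\alpha}_{d}$ may be fixed independently and the maximum of the sum is attained by maximizing each summand separately.

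Finally I would carry out the per-distance optimization. For each $d$ the attainable values of $|\mathcal{X}_{d}|$ are exactly $|\pd|$ (choosing $\mathcal{X}_{d}=\pd$), $|\nd|$ (choosing $\mathcal{X}_{d}=\nd$), and $0$ (choosing $\mathcal{X}_{d}=\emptyset$), whence $\max|\mathcal{X}_{d}|=\max(|\nd|,|\pd|)$. Summing the simultaneously-achievable optima over $d=1,\dots,d_{\max}$ gives
\begin{equation}
|\mathbb{H}_{u}^{\ast}|=\sum_{l=1}^{d_{\max}}\max(|\mathcal{N}_{l}|,|\mathcal{P}_{l}|).
\end{equation}

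I expect the crux to be the independence claim in the second paragraph: one must be sure that the global maximum is not strictly larger than the sum of per-distance maxima, i.e. that there is no joint choice of $\pmb{\alpha}$ herding some finer mixture of $\pd$ and $\nd$ within a single distance class. This is exactly where the out-branching hypothesis is essential --- uniqueness of the walk to each node forces both the disjointness of the distance classes and the disjoint, single-signed supports of the columns of $\mathcal{C}$ --- and it fails as soon as a node is reachable at two different lengths or by walks of both signs, as the cancellation in Fig.~\ref{fig:cancel} shows.
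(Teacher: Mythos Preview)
Your proposal is correct and follows exactly the route the paper intends: the corollary is stated without proof and is meant to be read off from \thref{th:outtree} by the disjointness and per-distance independence you spell out. If anything, you supply more justification than the paper does; your argument that the column supports of $\mathcal{C}$ are disjoint across distances (so the choices of $\pmb{\alpha}_{d}$ decouple) is precisely the content implicit in the proof of \thref{th:outtree}.
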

	\ \\
	
	In the case of an single input, input connectable,  directed out-branching where $\forall d \in \{1,2,\dots,d_{\text{max}}\}, \ \mathcal{N}_{d}=\emptyset \veebar \mathcal{P}_{d}=\emptyset$, \thref{th:size} shows that $|\mathbb{H}_{u}^{\ast}|=n$, i.e. that the system is completely herdable. Fig.~\ref{fig:choose} shows an example of an input rooted, out-branching. 
	\begin{figure}[h]
		\centering
		\begin{tikzpicture}[
		->,
		>=stealth',
		shorten >=2pt,
		auto,
		node distance=0.5cm,
		every text node part/.style={align=center},
		scale=0.25, every node/.style={scale=0.58}
		]
		\tikzstyle{every state}=[fill=none,draw=black,text=black]
		\node[
		state,
		] (n0) at(0,0)
		{$u$};
		\node[
		state,
		] (n1) at(-4,-4)
		{$x_{1}$};
		\node[
		state,
		] (n2) at(4,-4)
		{$x_{2}$};
		\node[
		state,
		] (n3) at(2,-8)
		{$x_{5}$};
		\node[
		state,
		] (n4) at(6,-8)
		{$x_{6}$};
		\node[
		state,
		] (n5) at(-6,-8)
		{$x_{3}$};
		\node[
		state,
		] (n6) at(-2,-8)
		{$x_{4}$};
		\path (n0) edge [->] node [left,above] {$-$}(n1)
		(n0) edge [->]  node {$+$} (n2)
		(n2) edge [->] node[left,above] {$-$} (n3)
		(n2) edge [->]  node {$+$} (n4)
		(n1) edge [->] node[left,above] {$-$} (n5)
		(n1) edge [->]  node {$+$} (n6);
		\end{tikzpicture}
		\caption{An example of an input rooted out-branching}
		\label{fig:choose}
	\end{figure}
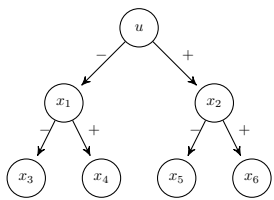
	
	\noindent The graph in Fig.~\ref{fig:choose} represents the following class of systems:
	
	\begin{equation}
	\dot{\mathbf{x}}=\begin{bmatrix}
	0 & 0 & 0 & 0 & 0 & 0 \\ 
	0 & 0 & 0 & 0 & 0 & 0 \\ 
	-\alpha_{1} & 0 & 0 & 0 & 0 & 0 \\ 
	\alpha_{2} & 0 & 0 & 0 & 0 & 0 \\ 
	0 & -\alpha_{3} & 0 & 0 & 0 & 0 \\ 
	0 & \alpha_{4} & 0 & 0 & 0 & 0
	\end{bmatrix}\mathbf{x}+ \begin{bmatrix}
	-\beta_{1}  \\ 
	\beta_{2}  \\ 
	0  \\ 
	0  \\ 
	0  \\ 
	0 
	\end{bmatrix}\mathbf{u} 
	\end{equation}
	where $\alpha_{1},\alpha_{2},\alpha_{3},\alpha_{4},\beta_{1},\beta_{2}>0$. It follows from \thref{th:outtree} (and can be verified by calculation of the controllability matrix $\mathcal{C}$), that the possible sets of herded nodes are $\{1,3,6\},\{1,4,5\},\{2,3,6\},\{2,4,5\}$. 
	
	The result of \thref{th:outtree} is similar in nature to the $k$-walk controllability theory \cite{Gao2014a}. The $k$-walk theory shows that for each $d\in\{1,2,\dots,d_{\text{max}}\}$ a single node at distance $d$ can be controlled. In the graph given in Fig.~\ref{fig:choose}, the possible sets of nodes that can be controlled are $\{1,3\},\{1,4\},\{1,5\},\{1,6\},\{2,3\},\{2,4\}\{2,5\},\{2,6\}.$ The maximal collection of nodes that are controlled in a directed out-branching from input $v_{u}$, $\mathbb{C}_{u}^{\ast}$, satisfies 
	$|\mathbb{C}_{u}^{\ast}|= d_{\text{max}}.$
	In the case of herding a network, \thref{th:size} shows that the maximal collection of nodes, $\mathbb{H}_{u}^{\ast}$, will satisfy
	$d_{\text{max}} \leq|\mathbb{H}_{u}^{\ast}|\leq n.$ The lower bound is achieved if the graph is a directed line graph and the upper bound is achieved if \thref{th:all} is satisfied.  
	In the worst case, the same number of nodes can be herded as can be controlled and depending on the network structure many more nodes can be herded. 
	
\section{Conclusion}\label{sec:conc}
In this paper, we introduce the notion of herdability and present a characterization of the herdability of a system via a condition on the range of the controllability matrix, $\mathcal{C}$. A classification of the underlying system graph allowed for the exploration of some consequences of the condition on the range of $\mathcal{C}$. It was shown that input connectability is a necessary condition for complete herdability and that the sets $\pd^{j}$ and $\nd^{j}$ can be used to characterize both a class of completely herdable systems and the nodes that can be herded in a single input, input rooted out-branching. 
\bibliographystyle{IEEEtran}
	\bibliography{Herdability}

\end{document}